\newtheorem{Lemma}{Lemma}
\newtheorem{lemma}[Lemma]{$\mathbf{Lemma}$}
\newcounter{problem}
\newcounter{save@equation}
\newcounter{save@problem}
\newenvironment{problem}
{\setcounter{problem}{\value{save@problem}}%
  \setcounter{save@equation}{\value{equation}}%
  \let\c@equation\c@problem
  \subequations
}
{\endsubequations
  \setcounter{save@problem}{\value{equation}}%
  \setcounter{equation}{\value{save@equation}}%
}
\begin{document}

\title{ \vspace{-0.5em}{\huge Unveiling the Importance of SIC in NOMA Systems:
\\ Part II: New Results and Future Directions  }\\\vspace{-0.25em}
{\Large {\it (Invited Paper)}}}\vspace{-0.25em}

\author{ Zhiguo Ding, \IEEEmembership{Fellow, IEEE}, Robert Schober, \IEEEmembership{Fellow, IEEE}, and H. Vincent Poor, \IEEEmembership{Life Fellow, IEEE}    \thanks{ 
  
\vspace{-2em}

    Z. Ding and H. V. Poor are  with the Department of
Electrical Engineering, Princeton University, Princeton, NJ 08544,
USA. Z. Ding
 is also  with the School of
Electrical and Electronic Engineering, the University of Manchester, Manchester, UK (email: \href{mailto:zhiguo.ding@manchester.ac.uk}{zhiguo.ding@manchester.ac.uk}, \href{mailto:poor@princeton.edu}{poor@princeton.edu}).
R. Schober is with the Institute for Digital Communications,
Friedrich-Alexander-University Erlangen-Nurnberg (FAU), Germany (email: \href{mailto:robert.schober@fau.de}{robert.schober@fau.de}).

  }\vspace{-2em}}
 \maketitle
\begin{abstract}  
In most existing works on   non-orthogonal multiple access (NOMA), the decoding order of successive interference cancellation (SIC) is prefixed and based on either the users' channel conditions or their quality of service (QoS) requirements. A recent work on NOMA assisted semi-grant-free transmission showed  that  the use of a more sophisticated hybrid SIC scheme can yield significant performance improvements.  This letter   illustrates how the concept of hybrid SIC can be generalized and applied to different NOMA applications. We first use NOMA assisted mobile edge computing (MEC) as an example to illustrate the benefits of hybrid SIC, where new results for delay and energy minimization are presented. Then, future directions for generalizing  hybrid SIC  with adaptive decoding order selection as well as its promising applications  are discussed.  
\end{abstract} \vspace{-1.2em} 	
\section{Introduction}

Successive interference cancellation (SIC) is a key component of non-orthogonal multiple access (NOMA) systems, and is crucial  for	 the performance of NOMA transmission \cite{8972353,mojobabook,Zhiguo_CRconoma}. In the first part of this two-part invited paper, we have explained  that, in most existing works on NOMA, the design of the SIC decoding order is prefixed and  based on either the  users' channel state information (CSI) or their quality of service (QoS) requirements \cite{mojobabook,Zhiguo_CRconoma,8352630}. This is primarily due to the general perception  that  the use of more than one SIC decoding orders is trivial and  unnecessary.  In the first part of this paper,   the recent work in \cite{SGFx} on a hybrid implementation of    CSI- and QoS-based SIC  has also been reviewed, where we showed that   
adaptively switching between   CSI- and QoS-based SIC      can avoid an outage probability error floor,    which is inevitable with  either of the two individual   schemes.  

The aim of the second part of this paper is to show that the findings in \cite{SGFx} can be generalized  and can be applied to different NOMA communication scenarios. For  illustration, we use NOMA assisted mobile edge computing (MEC) as an example \cite{8267072,9022993,8794550,Zhiguo_MEC1}. Recall that the key idea of MEC is to ask users to offload their computationally intensive tasks to the base station, instead of computing these tasks locally. Compared to  orthogonal multiple access (OMA) based MEC, the use of NOMA-MEC  ensures that multiple users can offload their tasks simultaneously, which is beneficial for reducing  the delay and energy consumption of MEC offloading. New results for NOMA-MEC are presented in this letter by applying hybrid SIC. In particular, the problem of joint energy and delay minimization is considered, in order to demonstrate that the findings in \cite{SGFx} are useful not only for   performance analysis   but also for resource allocation. The optimal solution for joint energy and delay minimization is  obtained first, and  then compared to OMA-MEC and the existing NOMA-MEC solution \cite{Zhiguo_MEC1,8794550}. Furthermore, future directions for the design of sophisticated SIC schemes as well as     promising  applications in different NOMA communication scenarios are presented.  
 
\vspace{-0.5em}
 
 \section{System Model} 
 Consider a NOMA-MEC offloading scenario, where two users, denoted by  ${\rm U}_m$ and  ${\rm U}_n$, respectively, offload their computationally intensive and inseparable tasks to the base station.   ${\rm U}_i$'s channel gain and task deadline are denoted by $h_i$ and  $D_i$ seconds, $i\in\{m,n\}$, respectively.  
 It is assumed that  the users' tasks contain the same number of nats,   denoted by $N$. We note that unlike the first part of this paper which focuses on performance analysis, the second part of the paper concerns  resource allocation, where the use of  nats is more   convenient than the use of bits. We further assume that $D_m<D_n$, i.e., ${\rm U}_m$'s task is more delay sensitive    than  ${\rm U}_n$'s, which means that, in OMA-MEC,  ${\rm U}_m$ is served during the first $D_m$ seconds and then  ${\rm U}_n$ is served during the remaining $(D_n-D_m)$ seconds.  
 \vspace{-1em}
 \subsection{Basics of  NOMA-MEC  }
 Instead of allowing the first $D_m$ seconds to be solely occupied by  ${\rm U}_m$, NOMA-MEC encourages    that   ${\rm U}_n$    offloads a part of its task during the first $D_m$ seconds, and then the remainder  of its task during the following $T_n$ seconds, where $T_n\leq D_n-D_m$. Denote  ${\rm U}_n$'s transmit powers during the two time slots by $P_{n,1}$ and $P_{n,2}$, respectively. The advantage of NOMA-MEC over OMA-MEC can be illustrated  by considering the extreme case $(D_n-D_m)\rightarrow 0$.  In this case, ${\rm U}_n$'s transmit power in   OMA   has to be infinity in order to deliver $N$ nats in a short period, whereas this singular situation does not exist for NOMA-MEC since  ${\rm U}_n$ can also use  the first $D_m$ seconds for   offloading.
 \vspace{-1em}
 \subsection{Existing NOMA-MEC Strategies}
To ensure that the use of NOMA-MEC is transparent to  ${\rm U}_m$,     QoS-based SIC   has been used, i.e.,   ${\rm U}_n$'s signal is decoded before ${\rm U}_m$'s during the first $D_m$ seconds, where  ${\rm U}_n$'s data rate  during the first $D_m$ seconds needs to be constrained as $
 R_n = \ln \left(1+\frac{P_{n,1}|h_n|^2}{P_m|h_m|^2+1}\right)$ and $P_m$ denotes ${\rm U}_m$'s transmit power \cite{Zhiguo_MEC1,8794550}. Therefore, the problem of joint energy and delay minimization
can be formulated as follows:
 \begin{problem}\label{pb:0}
  \begin{alignat}{2}
\underset{T_n, P_{n,1}, P_{n,2}}{\rm{min}} &\quad   D_mP_{n,1}+T_nP_{n,2}\label{0obj:1} \\
\rm{s.t.} & \quad  D_mR_n +T_n\ln\left(1+|h_n|^2P_{n,2}\right)\geq N\label{0st:1}
\\
& \quad 0\leq T_n \leq D_n-D_m\label{0st:3}
\\
& \quad P_{n,i}\geq 0, \quad i\in\{1,2\}\label{0st:4},
  \end{alignat}
\end{problem} 
where
 constraints \eqref{0st:1} and  \eqref{0st:3}  ensure that  ${\rm U}_n$ can finish its offloading   within $D_n$ seconds. We note that we omit the costs for the computation at the base station as well as the costs for downloading the computation results from the base station, similar to \cite{8267072,9022993,8794550,Zhiguo_MEC1}.   Following the same steps as in \cite{Zhiguo_MEC1}, we can show that the optimal solution of $T_n$ is $T_n^*=D_n-D_m$, and the optimal power allocation solution is given by \vspace{-0.2em}
    \begin{eqnarray}\label{eqlema}
\left\{\hspace{-0.5em}\begin{array}{l}  
P_{n,1}^* = \frac{\left(1+P_m|h_m|^2\right)}{|h_n|^{2}}\left(e^{\frac{N-T_n\ln \left(1+P_m|h_m|^2\right)}{D_m+T_n}}-1\right)\\ 
P_{n,2}^* =  \frac{ \left(1+P_m|h_m|^2\right)e^{\frac{N-T_n\ln \left(1+P_m|h_m|^2\right)}{D_m+T_n}  }-1 }{|h_n|^{2}}
     \end{array}\right.\hspace{-0.5em},
\end{eqnarray}
if $P_m\leq |h_m|^{-2}\left(e^{\frac{N}{D_n-D_m}}-1\right)$, otherwise OMA is used. 

\vspace{-0.5em}

 \section{New NOMA-MEC with Hybrid SIC }
The aim of this section is to investigate whether there is any benefit in applying  hybrid SIC, i.e., selecting  the SIC orders in an adaptive manner, which means that   the problem of joint energy and delay minimization can be formulated as follows: 
  \begin{problem}\label{pb:x}\vspace{-1em}
  \begin{alignat}{2}
\underset{T_n, P_{n,1}, P_{n,2}}{\rm{min}} &\quad   D_mP_{n,1}+T_nP_{n,2}\label{xobj:1} \\
\rm{s.t.} & \quad  D_m R_{n,1}  +T_n\ln (1+P_{n,2}|h_n|^2)\geq N\label{xst:1}
\\
& \quad D_m \ln \left(1+\frac{P_m|h_m|^2}{P_{n,1}|h_n|^2+1}\right)\geq \mathbf{1}_nN\label{xst:2}
\\
&\quad   \eqref{0st:3}, \eqref{0st:4} .
  \end{alignat}
\end{problem} 
where $R_{n,1} =  \mathbf{1}_n\ln (1+P_{n,1}|h_n|^2) +(1-\mathbf{1}_n) \ln\left(1+\frac{P_{n,1}|h_n|^2}{P_m |h_m|^2+1}\right)$,  $ \mathbf{1}_n$ is the indicator function, i.e., $ \mathbf{1}_n=1$ if  ${\rm U}_m$'s signal is decoded first during the first $D_m$ seconds, otherwise $ \mathbf{1}_n=0$. We note that   \ref{pb:x} is degraded to   \ref{pb:0} if   $ \mathbf{1}_n=0$. Therefore, in the remainder of the letter, we focus on the case of $ \mathbf{1}_n=1$:
  \begin{problem}\label{pb:1}
  \begin{alignat}{2}
\underset{T_n, P_{n,1}, P_{n,2}}{\rm{min}} &\quad   D_mP_{n,1}+T_nP_{n,2}\label{obj:1} \\
\rm{s.t.} & \quad  D_m \ln (1+P_{n,1}|h_n|^2) \nonumber \\  &\quad+T_n\ln (1+P_{n,2}|h_n|^2)\geq N\label{st:1}
\\
& \quad D_m \ln \left(1+\frac{P_m|h_m|^2}{P_{n,1}|h_n|^2+1}\right)\geq N\label{st:2}
\\
& \quad \eqref{0st:3}, \eqref{0st:4} .
  \end{alignat}
\end{problem} 
The following lemma provides the optimal solution of   \ref{pb:1}.
\begin{lemma}\label{lemma1}
Assume $P_m>|h_m|^{-2}\left(e^{\frac{N}{D_m}}-1\right)$. 
For \ref{pb:1}, the optimal solution of $T_n$ is given by $T_n^*=D_n-D_m$. The optimal power allocation solution is given by \vspace{-0.5em}
   \begin{eqnarray}\label{hnoma1x}
\left\{\hspace{-0.5em}\begin{array}{l}  
P_{n,1}^* = |h_n|^{-2} \frac{P_m|h_m|^2}{e^{\frac{N}{D_m}}-1}-|h_n|^{-2}\\ 
P_{n,2}^* =  |h_n|^{-2}  e^{\frac{N}{D_n-D_m}-\frac{D_m}{D_n-D_m} \ln \left( \frac{P_m|h_m|^2}{e^{\frac{N}{D_m}}-1}\right)}-|h_n|^{-2}
     \end{array}\right.\hspace{-0.5em}\vspace{-0.5em},
\end{eqnarray}
if $   |h_m|^{-2}\left(e^{\frac{N}{D_m}}-1\right)< P_m \leq |h_m|^{-2}e^{  \frac{N}{D_n}}\left(e^{\frac{N}{D_m}}-1\right) $, otherwise
    \begin{eqnarray}\label{hnoma2x}
P_{n,1}^* =P_{n,2}^*= &|h_n|^{-2}\left(e^{\frac{N}{D_n}}-1\right).
\end{eqnarray}
 
\end{lemma}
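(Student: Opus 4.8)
The plan is to reparametrize the problem in terms of the per-slot rates $R_1=\ln(1+P_{n,1}|h_n|^2)$ and $R_2=\ln(1+P_{n,2}|h_n|^2)$, so that $P_{n,i}=|h_n|^{-2}(e^{R_i}-1)$ and the objective becomes $|h_n|^{-2}[D_m(e^{R_1}-1)+T_n(e^{R_2}-1)]$. Under this change of variables, constraint \eqref{st:1} reads $D_mR_1+T_nR_2\ge N$, while constraint \eqref{st:2} becomes the time-independent cap $R_1\le R_1^{\max}:=\ln\!\big(P_m|h_m|^2/(e^{N/D_m}-1)\big)$; the standing assumption $P_m>|h_m|^{-2}(e^{N/D_m}-1)$ is exactly what guarantees $R_1^{\max}>0$, i.e.\ that the $\mathbf 1_n=1$ branch is feasible at all. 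I would first note that \eqref{st:1} must hold with equality at any optimum, since otherwise $R_2$ (hence $P_{n,2}$) could be lowered to strictly decrease the objective.

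The first real step is to pin down $T_n^*=D_n-D_m$. The obstacle here is that the term $T_n(e^{R_2}-1)$ (equivalently the product $T_nP_{n,2}$ and the coupling in \eqref{st:1}) is \emph{not} jointly convex in $(T_n,R_2)$, so the problem is not a convex program in the original variables and $T_n$ cannot be placed on the same footing as the powers. I would sidestep this with a monotonicity/extension argument: fixing the number of nats $N_2$ delivered in the second slot and holding $P_{n,1}$ fixed (so \eqref{st:2} is untouched), the second-slot energy is $f(T_n):=T_n|h_n|^{-2}(e^{N_2/T_n}-1)$, and a one-line derivative check shows $f'(T_n)<0$ for all $T_n>0$ (writing $x=N_2/T_n$, the sign is that of $e^x(1-x)-1<0$). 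Hence lengthening the second slot never increases the energy, forcing $T_n^*=D_n-D_m$. Equivalently, one may observe that in the variables (per-slot nats, $T_n$) the whole problem is jointly convex through the perspective function $T_n(e^{N_2/T_n}-1)$ and reach the same conclusion via the KKT conditions.

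With $T_n=D_n-D_m$ fixed and \eqref{st:1} active, the residual problem is to minimize the convex function $D_me^{R_1}+(D_n-D_m)e^{R_2}$ over the line $D_mR_1+(D_n-D_m)R_2=N$ subject to $0\le R_1\le R_1^{\max}$ and $R_2\ge 0$. Eliminating $R_2$, the univariate objective has derivative proportional to $e^{R_1}-e^{R_2}$, so its stationary point is $R_1=R_2$, which together with the active data constraint gives the unconstrained minimizer $R_1=R_2=N/D_n$. The proof then splits into two cases according to whether this balanced point respects the cap. If $R_1^{\max}\ge N/D_n$, i.e.\ $P_m\ge |h_m|^{-2}e^{N/D_n}(e^{N/D_m}-1)$, the cap is inactive and the optimum is $R_1=R_2=N/D_n$, which back-substitutes to $P_{n,1}^*=P_{n,2}^*=|h_n|^{-2}(e^{N/D_n}-1)$, yielding \eqref{hnoma2x}. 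Otherwise $R_1^{\max}<N/D_n$, and since the objective is strictly decreasing on $[0,R_1^{\max}]$ (the stationary point lies to its right), the cap binds: $R_1^*=R_1^{\max}$ and $R_2^*=(N-D_mR_1^{\max})/(D_n-D_m)$. Translating these back through $P_{n,i}=|h_n|^{-2}(e^{R_i}-1)$ reproduces exactly the two expressions in \eqref{hnoma1x}.

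Finally, I would verify that the regime boundaries match the statement: $R_1^{\max}\ge N/D_n \Leftrightarrow P_m|h_m|^2\ge e^{N/D_n}(e^{N/D_m}-1)$, the complement of the interval $|h_m|^{-2}(e^{N/D_m}-1)<P_m\le |h_m|^{-2}e^{N/D_n}(e^{N/D_m}-1)$ stated for \eqref{hnoma1x}, and confirm $R_2^*\ge 0$ in the capped case (automatic since $R_1^{\max}<N/D_n<N/D_m$). The step I expect to be the crux is the decoupling of $T_n$, because that is precisely where the non-convexity of the original formulation bites; once $T_n^*$ is fixed, everything collapses to a one-dimensional convex problem with a single box constraint whose analysis is routine.
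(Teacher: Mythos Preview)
Your argument is correct, and the key monotonicity step for $T_n^*=D_n-D_m$ (via $f(T_n)=T_n(e^{N_2/T_n}-1)$ being decreasing) is exactly the same as in the paper. Where you diverge is in the power-allocation part. The paper fixes $T_n$, writes down the full KKT system with four multipliers, and enumerates four candidate solutions---OMA ($P_{n,1}=0$), two hybrid-NOMA points (the cap \eqref{st:2} active or inactive), and pure NOMA ($P_{n,2}=0$)---and then devotes separate subsections to pairwise energy comparisons (hybrid$_2$ vs.\ hybrid$_1$, hybrid vs.\ pure NOMA, hybrid vs.\ OMA), each proved by introducing an auxiliary function and bounding its derivative. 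Your rate reparametrization $R_i=\ln(1+P_{n,i}|h_n|^2)$ collapses all of this: once \eqref{st:1} is active, the problem is a one-dimensional convex minimization on the interval $[0,R_1^{\max}]$ whose derivative $D_m(e^{R_1}-e^{R_2})$ has a single sign change at $R_1=R_2=N/D_n$, so the OMA and pure-NOMA endpoints are ruled out automatically and the only case split is whether the cap $R_1^{\max}$ lies to the left or right of $N/D_n$. Your route is shorter and makes the structure transparent; the paper's KKT-then-compare route is more mechanical but requires the extra inequality lemmas that your approach renders unnecessary. One small point to tighten: your ``\eqref{st:1} is active because otherwise lower $R_2$'' tacitly assumes $R_2>0$; if $R_2=0$ you would instead lower $R_1$, which is equally harmless since \eqref{st:2} is an upper bound.
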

\begin{proof}
See Appendix \ref{proof1}.
\end{proof}

 {\it Remark 1:}     Constraint \eqref{st:2} can be written as $P_{n,1}|h_n|^2\leq P_{m}|h_m|^2\left(e^{\frac{N}{D_m}}-1\right)^{-1}-1$. In order to ensure $P_{n,1}\neq 0$,   the feasibility of the constraint  needs the assumption      $P_m>|h_m|^{-2}\left(e^{\frac{N}{D_m}}-1\right)$ or equivalently $
D_m\ln (1+P_m|h_m|^2)> N$. Otherwise, OMA-MEC is used. In practice, this assumption can be justified     if  ${\rm U}_m$ is willing to increase its transmit power to help  ${\rm U}_n$. Also, if    ${\rm U}_m$ applies a coarse-level power control, $P_m$ has to be strictly larger than  $|h_m|^{-2}\left(e^{\frac{N}{D_m}}-1\right)$ anyways.   

{\it Remark 2:} The solutions of \ref{pb:0} and \ref{pb:1} share two common features. The first one is that they both outperform  OMA, as shown in   \cite{Zhiguo_MEC1} and in the proof for Lemma \ref{lemma1} in this letter.  The second one is that pure NOMA, i.e., $P_{n,2}=0$, is never preferred. In particular,  the solutions in   \eqref{eqlema},  \eqref{hnoma1x}, and \eqref{hnoma2x} correspond to the class of  hybrid NOMA schemes, i.e., ${\rm U}_n$ uses   NOMA     during the first $D_m$ seconds, and then   OMA   during  the  remaining $(D_n-D_m)$ seconds.   

The optimal solution of \ref{pb:x} can be straightforwardly obtained by numerically comparing the energy consumption required for the closed-form solutions in  \eqref{eqlema} and  \eqref{hnoma1x} (or \eqref{hnoma2x}), and selecting the most energy efficient solution. The  solutions in  \eqref{eqlema} and \eqref{hnoma2x} can be compared  analytically, as shown in the following lemma.  
\begin{lemma}\label{lemma2}
Assume $ e^{  \frac{N}{D_n}}\left(e^{\frac{N}{D_m}}-1\right)\leq   \left(e^{\frac{N}{D_n-D_m}}-1\right)$. For the case of $|h_m|^{-2}e^{  \frac{N}{D_n}}\left(e^{\frac{N}{D_m}}-1\right)\leq P_m \leq |h_m|^{-2}\left(e^{\frac{N}{D_n-D_m}}-1\right)$, the  new   solution shown in    \eqref{hnoma2x} is more energy efficient than the   existing   one shown in \eqref{eqlema}. 
\end{lemma}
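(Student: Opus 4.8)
The plan is to reduce the claimed energy comparison to a single scalar inequality and then settle it with an elementary concavity argument. First I would evaluate the objective $D_m P_{n,1}+T_n P_{n,2}$ at each candidate, using $T_n^\ast=D_n-D_m$ so that $D_m+T_n=D_n$. For the new solution \eqref{hnoma2x}, where $P_{n,1}^\ast=P_{n,2}^\ast=|h_n|^{-2}(e^{N/D_n}-1)$, the energy collapses to
$$E_{\mathrm{new}}=|h_n|^{-2}D_n\left(e^{N/D_n}-1\right).$$
For the existing solution \eqref{eqlema}, writing $A:=1+P_m|h_m|^2$ and $\alpha:=D_m/D_n\in(0,1)$, the common exponent becomes $e^{(N-(D_n-D_m)\ln A)/D_n}=A^{-(1-\alpha)}e^{N/D_n}$, and after collecting terms I expect
$$E_{\mathrm{old}}=|h_n|^{-2}\left[D_n A^{\alpha}e^{N/D_n}-D_m A-(D_n-D_m)\right].$$

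Next I would form the difference. After the cancellation of the $-(D_n-D_m)$ and $+D_n$ terms, the $e^{N/D_n}$ contributions group as $D_n e^{N/D_n}(A^{\alpha}-1)$ and the rest as $-D_m(A-1)$, giving
$$E_{\mathrm{old}}-E_{\mathrm{new}}=|h_n|^{-2}\left[D_n e^{N/D_n}\left(A^{\alpha}-1\right)-D_m(A-1)\right].$$
Thus the lemma is equivalent to $D_n e^{N/D_n}(A^{\alpha}-1)\ge D_m(A-1)$, i.e., after dividing by $D_n e^{N/D_n}$ and setting $u:=P_m|h_m|^2=A-1\ge 0$,
$$(1+u)^{\alpha}\ge 1+\frac{\alpha u}{e^{N/D_n}}.$$
Note the naive Bernoulli bound points the wrong way, since $\alpha\in(0,1)$ gives $(1+u)^{\alpha}\le 1+\alpha u$; the saving grace is the factor $e^{-N/D_n}<1$ on the right, which shrinks the linear term.

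To finish, I would study $f(u):=(1+u)^{\alpha}-1-\alpha u\,e^{-N/D_n}$ for $u\ge 0$. One checks $f(0)=0$ and $f''(u)=\alpha(\alpha-1)(1+u)^{\alpha-2}<0$, so $f$ is strictly concave with a unique interior maximizer solving $f'(u)=\alpha(1+u)^{\alpha-1}-\alpha e^{-N/D_n}=0$. Solving $(1+u)^{\alpha-1}=e^{-N/D_n}$ and using $1-\alpha=(D_n-D_m)/D_n$ yields the critical point $u^\ast=e^{N/(D_n-D_m)}-1$. The crux of the argument is the structural observation that this $u^\ast$ coincides exactly with the upper endpoint of the regime, namely $P_m=|h_m|^{-2}(e^{N/(D_n-D_m)}-1)$. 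Hence $f'>0$ on $[0,u^\ast)$, so $f(u)\ge f(0)=0$ for every $u$ up to $u^\ast$, which covers the whole interval $e^{N/D_n}(e^{N/D_m}-1)\le u\le e^{N/(D_n-D_m)}-1$ stated in the lemma.

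The main obstacle is not any single hard step but the bookkeeping in the second paragraph: one must simplify $E_{\mathrm{old}}$ cleanly enough to expose the factor $A^{\alpha}-1$, recognizing that the exponent reduces to $A^{\alpha}e^{N/D_n}$. Once the comparison is distilled into $(1+u)^{\alpha}\ge 1+\alpha u\,e^{-N/D_n}$, the fact that the concave function's maximizer lands precisely on the regime's right endpoint makes the monotonicity conclusion immediate; indeed it shows the new solution dominates not merely on the stated interval but on all of $[0,u^\ast]$. The lemma's hypothesis $e^{N/D_n}(e^{N/D_m}-1)\le e^{N/(D_n-D_m)}-1$ is needed only to ensure the interval is nonempty (i.e.\ that its lower endpoint does not exceed $u^\ast$).
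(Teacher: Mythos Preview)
Your proposal is correct and follows essentially the same route as the paper's proof. After computing the two energies, the paper reduces the comparison to showing $\psi(x):=-D_m x+D_n e^{N/D_n}(1+x)^{D_m/D_n}\ge D_n e^{N/D_n}$ for $x=P_m|h_m|^2\in\bigl[0,\,e^{N/(D_n-D_m)}-1\bigr]$ and proves this by observing that $\psi'$ is decreasing with $\psi'\bigl(e^{N/(D_n-D_m)}-1\bigr)=0$, whence $\psi$ is increasing and $\psi(x)\ge\psi(0)$; your function $f(u)$ is precisely $(D_n e^{N/D_n})^{-1}\bigl(\psi(u)-\psi(0)\bigr)$, so the two arguments coincide.
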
 
\begin{proof}
See Appendix \ref{proof2}.
\end{proof} 

{\it Numerical Studies:} 
In this section, the performance of different MEC strategies is studied by using computer simulations, where the users' average  channel gains are assumed to be identical  and normalized, a situation ideal  for the application of QoS-based SIC. We will show that it is still beneficial to use hybrid SIC in this situation.   In Fig. \ref{fig2a}, the energy consumption of MEC offloading is shown as a function of   $D_n$. As can be observed from the figure, the use of the new NOMA-MEC strategy can yield a significant reduction in energy consumption, compared to OMA-MEC and the existing  NOMA-MEC solution proposed in \cite{Zhiguo_MEC1}, particularly when $N$  is small.

   Fig. \ref{fig2a}  also shows that there are instances when the new NOMA-MEC scheme     achieves the same performance as the existing NOMA-MEC solution, which indicates that the solution of   \ref{pb:0} can outperform the one of   \ref{pb:1}. Therefore, in Fig. \ref{fig2b}, the solutions of \ref{pb:0} and \ref{pb:1} are compared in detail, where  $ P_m \geq |h_m|^{-2}\left(e^{\frac{N}{D_m}}-1\right)$ is considered.   When $P_m$ is small, the solution in \eqref{hnoma1x} is used, and Fig. \ref{fig2b} shows that it is possible for the solution of \ref{pb:0} to outperform the one of \ref{pb:1}.  By increasing $P_m$, the solution in \eqref{hnoma2x} becomes feasible, and Fig. \ref{fig2b} shows that the   solution  in \eqref{hnoma2x} is more energy efficient than the     one   in \eqref{eqlema}, which confirms Lemma \ref{lemma2}.    

\begin{figure}[t]  
\begin{center}\subfigure[ $P_m=1$ W ]{\label{fig2a}\includegraphics[width=0.38\textwidth]{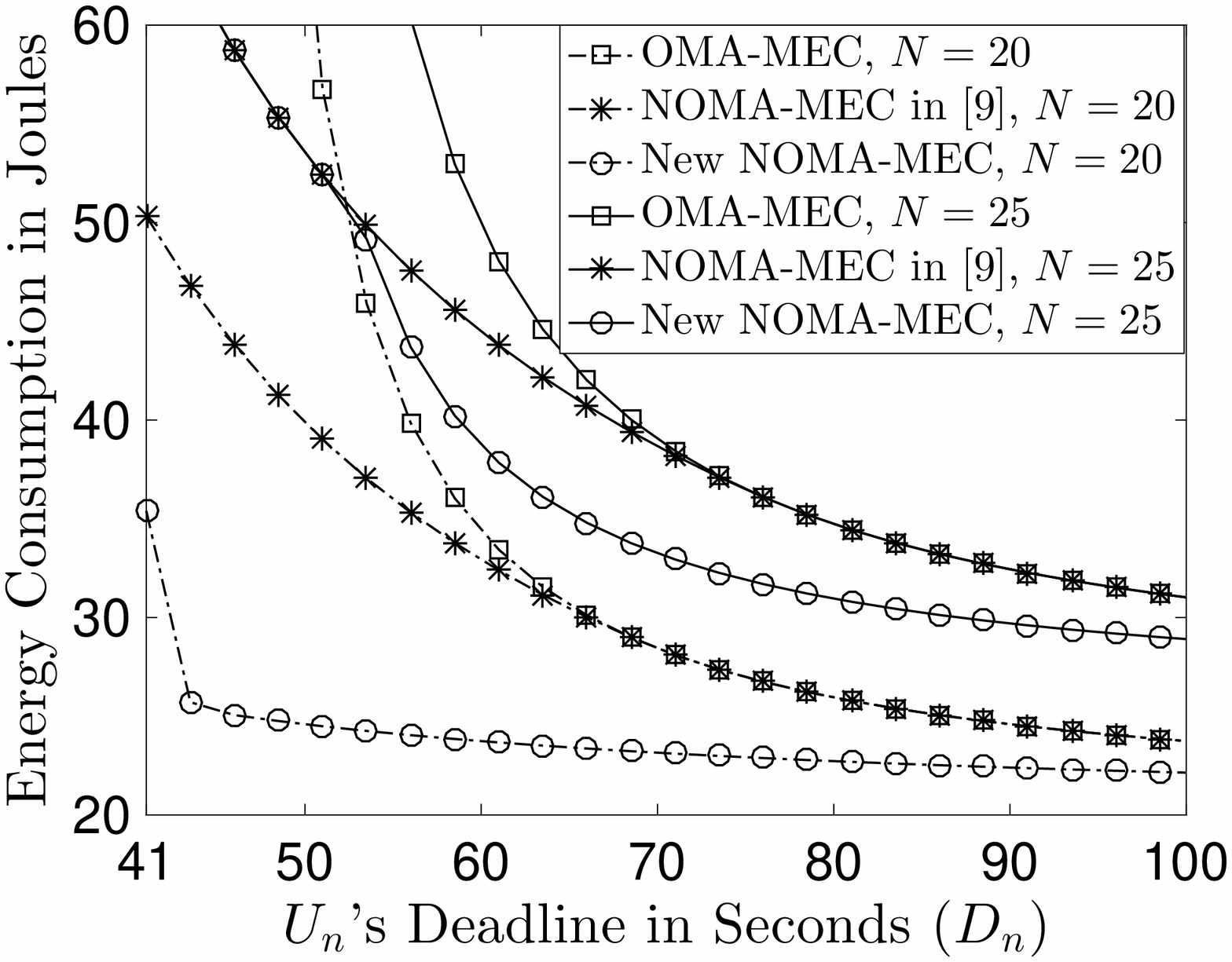}}
\subfigure[ $N=20$]{\label{fig2b}\includegraphics[width=0.38\textwidth]{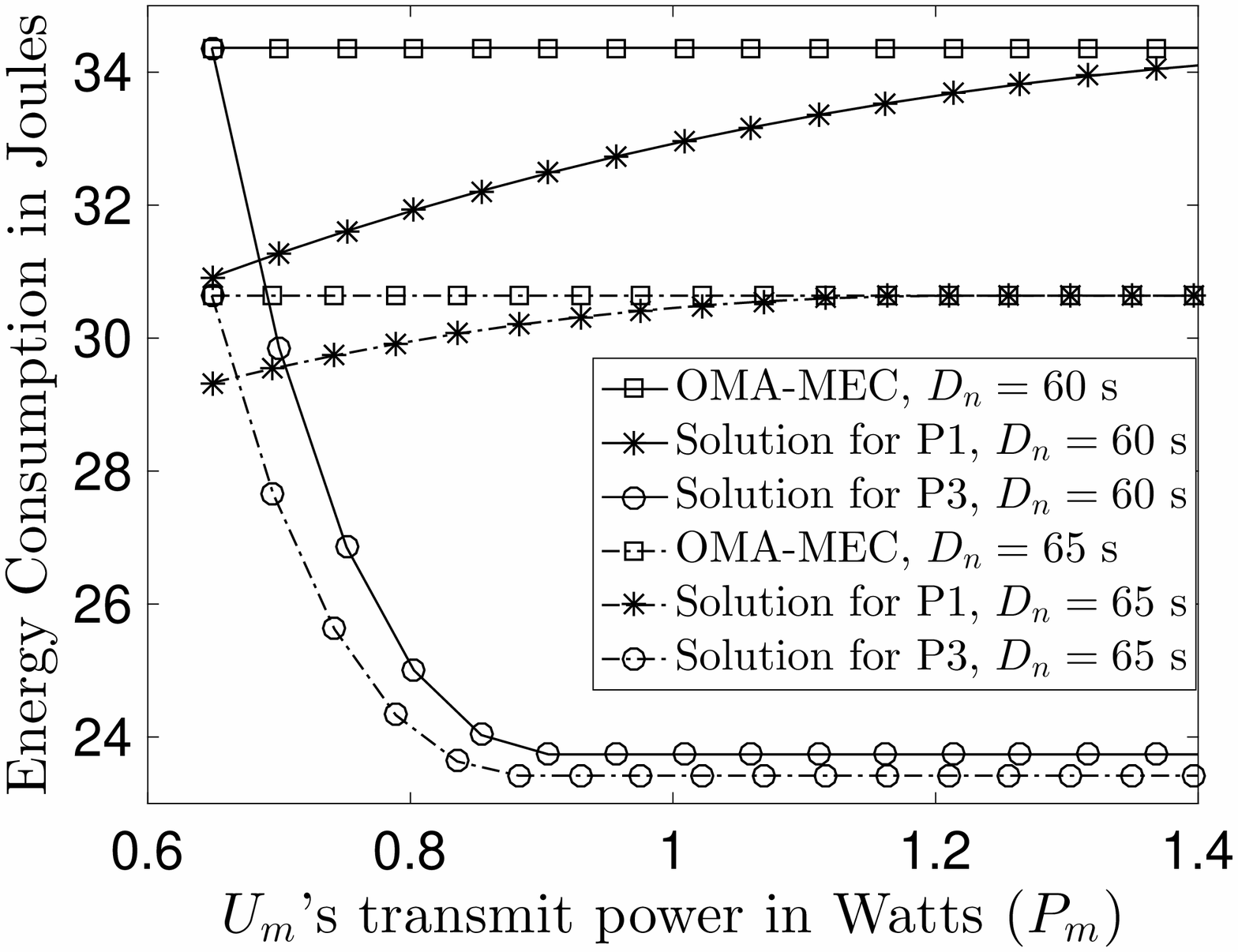}} \vspace{-1em}
\end{center} 
 \caption{The impact of the NOMA  strategies on the energy consumption required by MEC offloading.  $|h_m|^2= |h_n|^2=1$, and $D_m=40$~s.     }\label{fig 2}\vspace{-1.5em}
\end{figure}
\vspace{-1em}
\section{Conclusions and Future Directions}
In the second part of this invited paper, we have used NOMA-MEC as an example to illustrate how the new findings in \cite{SGFx} can be generalized. In particular,  a hybrid SIC based optimal solution for  joint energy and delay minimization was    obtained  and its superior performance compared   to   benchmark schemes was demonstrated.    Some promising directions for future research on hybrid SIC with adaptive decoding order selection are listed in the following.

\subsubsection{Fundamentals of hybrid SIC} For uplink NOMA, \cite{SGFx} showed the benefits of using hybrid SIC in two-user scenarios. When the number of users increases,   the number of possible SIC orders    increases significantly. Therefore, an important future direction is  to design practical hybrid SIC schemes  for striking a balanced tradeoff between system complexity and performance     \cite{7582543}. For downlink NOMA, it is still not known whether hybrid SIC  is beneficial, but the duality between uplink and downlink suggests that the design of hybrid SIC for downlink NOMA  is an important direction for future research.

\subsubsection{Green communications} The initial results shown in Fig.~\ref{fig2b} indicate that the use of hybrid SIC can significantly improve the energy efficiency of NOMA transmission. However, the energy reduction experienced by ${\rm U}_n$ is obtained at the price of increasing ${\rm U}_m$'s transmit power, which motivates a future study of user cooperation to improve the energy efficiency, which   opens up a new dimension for the design of future green communication systems.

\subsubsection{User clustering and resource allocation} For CSI-based SIC, it is preferable to group users with different channel conditions and encourage them to transmit/receive in the same subcarrier/time-slot. For  QoS-based SIC, it is preferable to group users with different QoS requirements.  These  clear  preferences provide simple guidances   for the design of user clustering and resource allocation. However,   hybrid SIC does not have these clear preferences, which makes a compact problem formulation difficult and   results in a  higher   complexity,   which is the price for the  significant performance improvements.  Therefore, designing low-complexity user clustering and resource allocation schemes  for   hybrid SIC is another important future research direction, where advanced tools, such as game theory and machine learning, can be   useful.

\subsubsection{Multiple-input multiple-output (MIMO) and intelligent reflecting surface (IRS) assisted NOMA} The use of hybrid SIC could be particularly useful in MIMO-NOMA systems. Recall that it is difficult  to order MIMO users due to the fact that the users' channels are in vector/matrix form. Therefore, most existing MIMO-NOMA schemes simply rely on the prefixed   SIC  decoding  order, whereas the use of hybrid SIC   increases the degrees of freedom available  for   system design.  Similarly, in the context  of IRS-NOMA, the use of hybrid SIC avoids relying on a single SIC decoding order, and hence introduces  more flexibility not only at the transceivers, but also at the IRS, which is helpful for  improving the system performance.   

\subsubsection{Emerging applications of NOMA} Many emerging  applications of NOMA will benefit from the use of hybrid SIC. For example, the delay and energy consumption of MEC offloading can be reduced, as shown by the initial results reported in this letter, but more rigorous studies from both the performance analysis and optimization perspectives  are needed. In addition to MEC, wireless caching is another functionality to be supported by fog networking, where hybrid SIC can also be useful. Particularly, in addition to the users' channel conditions and QoS requirements, the type of file content can also be taken into account  for the design of SIC. Similarly, in the context of NOMA assisted orthogonal time frequency space modulation (OTFS),   hybrid SIC can be further extended by taking   the users' heterogenous  mobility profiles into account for selecting the SIC decoding order. 

 \vspace{-1em}
\appendices
\section{Proof for Lemma \ref{lemma1}}\label{proof1}
 
\subsection{Obtaining Possible Solutions for Optimal Power Allocation}     

We   first find   closed-form solutions for power allocation by fixing $T_n$. 
By recasting constraint \eqref{st:1} as  $ -D_m \ln (1+P_{n,1}|h_n|^2) -T_n\ln (1+P_{n,2}|h_n|^2)\leq - N$,  it is straightforward to show that  \ref{pb:1} is convex, and the  optimal power allocation solution can be obtained by using the KKT conditions   listed in the following:
\begin{align}
\left\{ 
\begin{array}{l}
D_m - \lambda_3\frac{D_m|h_n|^2 }{1+P_{n,1}|h_n|^2} +\lambda_4|h_n|^2-\lambda_1=0\\
T_n-\lambda_3\frac{T_n|h_n|^2}{1+P_{n,2}|h_n|^2} -\lambda_2=0\\
\lambda_3(N-D_m \ln (1+P_{n,1}|h_n|^2) \\\qquad-T_n\ln (1+P_{n,2}|h_n|^2))=0\\
\lambda_4\left(P_{n,1}|h_n|^2 - \frac{P_m|h_m|^2}{e^{\frac{N}{D_m}}-1}+1   \right)=0\\
\lambda_iP_{n,i}=0, \quad i \in \{1,2\}
\\  \eqref{st:1}, \eqref{st:2}, \eqref{0st:3},\eqref{0st:4}
\end{array}
\right.,
\end{align}
where $\lambda_i$, $i\in\{1,\cdots, 4\}$, denote   Lagrange multipliers.  

Depending on the choices of the   Lagrange multipliers,  possible solutions are obtained as follows. 
\begin{itemize}
\item The choice of  $\lambda_1\neq0$ yields an OMA solution:
  \begin{eqnarray}\label{oma}
P_{n,1}^* = 0,\quad 
P_{n,2}^* =  |h_n|^{-2}\left( e^{\frac{N}{T_n}}-1\right).
\end{eqnarray}

\item The choice of $\lambda_1=0$, $\lambda_2=0$, and $\lambda_4\neq0$ yields a possible hybrid NOMA solution:
 \begin{eqnarray}\label{hnoma1}
\left\{\begin{array}{rl}  
P_{n,1}^* = &|h_n|^{-2}\left(\frac{P_m|h_m|^2}{e^{\frac{N}{D_m}}-1}-1 \right)\\ 
P_{n,2}^* = & |h_n|^{-2} \left(e^{\frac{N}{T_n}-\frac{D_m}{T_n} \ln \left( \frac{P_m|h_m|^2}{e^{\frac{N}{D_m}}-1}\right)}-1\right)
     \end{array}\right.,
\end{eqnarray}
if $ e^{  \frac{N}{D_m}}\left(e^{\frac{N}{D_m}}-1\right)\geq   P_m|h_m|^2\geq e^{\frac{N}{D_m}}-1$.

\item The choice of $\lambda_1=0$, $\lambda_2=0$, and $\lambda_4=0$ yields another possible hybrid NOMA solution:  
\begin{eqnarray}\label{hnoma2}
P_{n,1}^* =  
P_{n,2}^* =  |h_n|^{-2}\left(e^{\frac{N}{D_m+T_n}}-1\right),
\end{eqnarray}
if $   P_m|h_m|^2\geq  e^{\frac{N}{D_m+T_n}}\left(e^{\frac{N}{D_m}}-1\right) $.

\item The choice of $\lambda_1=0$ and $\lambda_2\neq0$ yields a pure NOMA solution:
 \begin{eqnarray}\label{noma}
P_{n,1}^* = |h_n|^{-2}\left(e^{\frac{N}{D_m}}-1\right), P_{n,2}^* =  0 ,
\end{eqnarray}
if $  P_m|h_m|^2\geq  e^{\frac{N}{D_m}}\left(e^{\frac{N}{D_m}}-1\right) 
$.
\end{itemize} 
\vspace{-1em}  
\subsection{Optimizing $T_n$}
Without loss of generality, take   the power allocation solution in \eqref{hnoma1} as an example. The corresponding  overall energy consumption is given by
\begin{align}  
E_{H1} =&D_m |h_n|^{-2}\left( \frac{P_m|h_m|^2}{e^{\frac{N}{D_m}}-1}-1\right)\\\nonumber&+T_n|h_n|^{-2}\left(e^{\frac{N}{T_n}-\frac{D_m}{T_n} \ln \left( \frac{P_m|h_m|^2}{e^{\frac{N}{D_m}}-1}\right)}-1\right).
 \end{align}
 By defining $N'=D_m \ln \left( \frac{P_m|h_m|^2}{e^{\frac{N}{D_m}}-1}\right)$, the overall energy consumption can be simplified  as follows:
 \begin{align}  \label{eh1xx}
E_{H1} =&\frac{D_m}{ |h_n|^{2}}\left( \frac{P_m|h_m|^2}{e^{\frac{N}{D_m}}-1}-1\right) +\frac{T_n}{|h_n|^{2}}\left(e^{\frac{N-N'}{T_n} }-1\right).
 \end{align}
 Define $f(x) \triangleq x\left(e^{\frac{a}{x}}-1\right)$ which is shown to be a monotonically decreasing function of $x$ for $x\geq 0$, where $a$ is a constant. The first order derivative of $f(x)$ is given by
\begin{align}
f'(x) 
=&\left(e^{\frac{a}{x}}-1\right)- e^{\frac{a}{x}}  \frac{a}{x}.
\end{align}
Further define $g(y) \triangleq \left(e^{y}-1\right)- ye^{y} $. One can find that $g(y)$ is a monotonically decreasing function of $y$ for $y\geq 0$, since 
\begin{align}
g'(y) =   &   e^{y} - e^{y}- ye^{y}=- ye^{y}\leq 0.
\end{align}
Therefore, $f'(x) $ is a monotonically increasing function of $x$, which means $f'(x) \leq  f'(\infty)=0 $, and hence $f(x)$ is indeed a monotonically decreasing function of $x$. Therefore, $T_n^*=D_n-D_m$ for the hybrid NOMA solution shown in \eqref{hnoma1}.  Similarly,   $T_n^*=D_n-D_m$  also holds for the other power allocation solutions. 
\vspace{-1em} 
\subsection{Comparison of the Solutions}
\subsubsection{ Comparing  the two hybrid NOMA solutions} For the case of $ e^{  \frac{N}{D_m}}\left(e^{\frac{N}{D_m}}-1\right)\geq   P_m|h_m|^2\geq e^{\frac{N}{D_n}}\left(e^{\frac{N}{D_m}}-1\right) $, the two hybrid NOMA solutions are feasible, and we will show that the solution in \eqref{hnoma2} outperforms the one in \eqref{hnoma1}.    

By using the fact that $T_n^*=D_n-D_m$,  the overall energy consumption for the solution in \eqref{hnoma2} is given by 
 \begin{align}\label{eh2}
 E_{H2} = D_n|h_n|^{-2}\left(e^{\frac{N}{D_n}}-1\right),
 \end{align}
 and the energy consumption of the   solution in \eqref{hnoma1} is given by \eqref{eh1xx}.   In order to show $E_{H1}\geq E_{H2}$,  it is sufficient to show that the following inequality holds  
\begin{align}\nonumber
&   D_n  e^{\frac{N}{D_n}}\left( \frac{P_m|h_m|^2}{e^{\frac{N}{D_m}}-1}\right)^{\frac{D_m}{D_n-D_m}} -D_m \left(\frac{P_m|h_m|^2}{e^{\frac{N}{D_m}}-1} \right)^{\frac{D_n}{D_n-D_m}}  \leq \\ 
& (D_n-D_m) e^{\frac{N}{D_n-D_m} }   .\label{ine1}
 \end{align}
To prove the   inequality in \eqref{ine1}, we define the following function
   \begin{align}  
\phi(x) =    D_n  e^{\frac{N}{D_n}}x^{\frac{D_m}{D_n-D_m}} -D_m x^{\frac{D_n}{D_n-D_m}}  ,
 \end{align} 
 where $e^{\frac{N}{D_n}}\leq x\leq e^{\frac{N}{D_m}}$. The first order derivative of $\phi(x)$ is given by
    \begin{align}  
\phi'(x) 
=  &    \frac{D_mD_n}{D_n-D_m}x^{\frac{D_m}{D_n-D_m}}  \left(e^{\frac{N}{D_n}}x^{-1} - 1 \right).
\end{align}
By using the fact that $x\geq e^{\frac{N}{D_n}}$, $\phi'(x) $ can be upper bounded as follows:
\begin{align} 
\phi'(x) \leq  &    \frac{D_mD_n}{D_n-D_m}x^{\frac{D_m}{D_n-D_m}}  \left(e^{\frac{N}{D_n}}\left(e^{\frac{N}{D_n}}\right)^{-1} - 1 \right)=0,
 \end{align} 
 which shows that $\phi(x)$ is a monotonically decreasing function of $x$ for $e^{\frac{N}{D_n}}\leq x\leq e^{\frac{N}{D_m}}$. Therefore, we have the following inequality
 \begin{align}  \nonumber
&   D_n  e^{\frac{N}{D_n}}\left( \frac{P_m|h_m|^2}{e^{\frac{N}{D_m}}-1}\right)^{\frac{D_m}{D_n-D_m}} -D_m \left(\frac{P_m|h_m|^2}{e^{\frac{N}{D_m}}-1} \right)^{\frac{D_n}{D_n-D_m}}   \\\nonumber
& \leq \phi\left(e^{\frac{N}{D_n}}\right)=D_n  e^{\frac{N}{D_n}}\left(e^{\frac{N}{D_n}}\right)^{\frac{D_m}{D_n-D_m}} -D_m \left(e^{\frac{N}{D_n}} \right)^{\frac{D_n}{D_n-D_m}}  
 \\ 
 & =D_n   e^{\frac{N}{D_n-D_m}} -D_m e^{\frac{N}{D_n-D_m}} .
 \end{align}
 Therefore, the inequality in \eqref{ine1} is proved, i.e., $E_{H1}\geq E_{H2}$ for 
$ e^{  \frac{N}{D_m}}\left(e^{\frac{N}{D_m}}-1\right)\geq   P_m|h_m|^2\geq e^{\frac{N}{D_n}}\left(e^{\frac{N}{D_m}}-1\right) $.

\subsubsection{Comparison of hybrid NOMA and pure NOMA} For the case of  $P_m|h_m|^2\geq
 e^{\frac{N}{D_m}}\left(e^{\frac{N}{D_m}}-1\right)$,   the pure NOMA solution in \eqref{noma} and   the hybrid NOMA solution in \eqref{hnoma2}  are feasible. The   energy consumption required by the pure NOMA solution is given by  
  \begin{align}
 E_{N} = &D_m|h_n|^{-2}\left(e^{\frac{N}{D_m}}-1\right) \\\nonumber \geq& D_n|h_n|^{-2}\left(e^{\frac{N}{D_n}}-1\right) = E_{H2},
 \end{align}
 where the inequality follows from the fact that $f(x)$ is a monotonically decreasing function of $x$ for $x\geq 0$.   

 \subsubsection{Comparison of OMA and hybrid NOMA}
 By following the same steps as in the previous subsection,   it is straightforward to show that the hybrid NOMA solution shown in \eqref{hnoma2} outperforms OMA. The comparison between OMA and the hybrid NOMA solution shown in \eqref{hnoma1} is challenging and will be focused on in the following.

Recall the energy consumption for OMA is  $
 E_{OMA}= (D_n-D_m) |h_n|^{-2}\left( e^{\frac{N}{D_n-D_m}}-1\right)$. In order to show $E_{H1}\leq E_{OMA}$, it is sufficient to prove the following inequality 
  \begin{align} \label{ine2}
 &D_m |h_n|^{-2}\left( \frac{P_m|h_m|^2}{e^{\frac{N}{D_m}}-1}-1\right)+(D_n-D_m)|h_n|^{-2}\\\nonumber&\times \left(e^{\frac{N}{D_n-D_m} }  \left( \frac{P_m|h_m|^2}{e^{\frac{N}{D_m}}-1}\right)^{-\frac{D_m}{D_n-D_m}}-1\right)\leq\\\nonumber
& (D_n-D_m) |h_n|^{-2}\left( e^{\frac{N}{D_n-D_m}}-1\right),
 \end{align}
 where $ |h_m|^{-2}\left(e^{\frac{N}{D_m}}-1 \right)\leq   P_m\leq |h_m|^{-2}e^{\frac{N}{D_n}}\left(e^{\frac{N}{D_m}}-1\right) $.
 
Eq. \eqref{ine2} is equivalent to the following inequality:  
  \begin{align} \label{ine2x}
 &D_m  \left( \frac{P_m|h_m|^2}{e^{\frac{N}{D_m}}-1}-1\right) +(D_n-D_m)  e^{\frac{N}{D_n-D_m} } \\\nonumber &\times \left[ \left( \frac{P_m|h_m|^2}{e^{\frac{N}{D_m}}-1}\right)^{-\frac{D_m}{D_n-D_m}} -1\right] \leq 0.
 \end{align}
 In order to prove \eqref{ine2}, we  define the following function
   \begin{align} 
 \varphi(x)=&D_m  \left( x-1\right) \\\nonumber &+(D_n-D_m)  e^{\frac{N}{D_n-D_m} } \left[ x^{-\frac{D_m}{D_n-D_m}} -1\right] . 
 \end{align}
The inequality in \eqref{ine2} can be proved if $\varphi(x)\leq 0$, for $  1\leq  x\leq  e^{\frac{N}{D_n}}  $, which is proved in the following.  
 The first order derivative of $\varphi(x) $ is given by
    \begin{align} 
 \varphi'(x) 
 =&D_m  -D_m  e^{\frac{N}{D_n-D_m} }  x^{-\frac{D_n}{D_n-D_m}}, 
 \end{align}
which shows that  $\varphi'(x)$ is a monotonically increasing function of $x$. By using the fact that $x\leq e^{\frac{N}{D_n}}$,   $\varphi'(x)$ can be lower bounded as follows:
     \begin{align} 
 \varphi'(x)\leq & \varphi'\left(e^{\frac{N}{D_n}}\right) 
 \\\nonumber
 =&D_m  -D_m  e^{\frac{N}{D_n-D_m} }  \left(e^{\frac{N}{D_n}}\right)^{-\frac{D_n}{D_n-D_m}}     =0. 
 \end{align}
 Therefore, $ \varphi(x)$ is a monotonically decreasing function of $x$. Since $x\geq 1$, we have
    \begin{align} 
 \varphi(x)\geq & \varphi(1)  =0,
 \end{align}
 which proves the inequality in \eqref{ine2}, i.e., $E_{OMA}>E_{H1}$. Therefore, hybrid NOMA outperforms pure NOMA and OMA, when all of them are feasible. When both the hybrid solutions are feasible, the solution in \eqref{hnoma2} outperforms the one in \eqref{hnoma1}. Thus, the proof is complete.  
 
 \vspace{-1em}
\section{Proof for Lemma \ref{lemma2}}\label{proof2}
For the   case of $|h_m|^{-2}e^{  \frac{N}{D_n}}\left(e^{\frac{N}{D_m}}-1\right)\leq P_m \leq |h_m|^{-2}\left(e^{\frac{N}{D_n-D_m}}-1\right)$,   both the two solutions in  \eqref{eqlema} and  \eqref{hnoma2x}  are feasible.  With some algebraic manipulations,  the overall energy consumption realized by the solution in  \eqref{eqlema} is given by
\begin{align}\nonumber
&E_0=D_m |h_n|^{-2} \left(
e^{\frac{N }{D_n}}
\left(1+P_m|h_m|^2\right)^{\frac{D_m}{D_n}}-P_m|h_m|^2-1\right)\\ \nonumber&+ (D_n-D_m) |h_n|^{-2}\left( 
e^{\frac{N }{D_n}  }
\left(1+P_m|h_m|^2\right)^{\frac{D_m}{D_n}  }
-1\right). 
\end{align}
The overall energy consumption with the solution in \eqref{hnoma2} is given by \eqref{eh2}. 
 In order to show that $E_0\geq E_{H2}$, it is sufficient to prove the following inequality:  
 \begin{align}\label{dmx}
-D_m    P_m|h_m|^2+ D_n
e^{\frac{N }{D_n}  }
\left(1+P_m|h_m|^2\right)^{\frac{D_m}{D_n}  }\geq 
 D_n   e^{\frac{N}{D_n}} .
\end{align}
 In order to prove \eqref{dmx}, we define the following function 
 \begin{align}
\psi(x) = -D_m    x+ D_n
e^{\frac{N }{D_n}  }
\left(1+x\right)^{\frac{D_m}{D_n}  } ,
\end{align}
where $e^{\frac{N}{D_n}}\left(e^{\frac{N}{D_m}}-1\right)\leq x \leq e^{\frac{N}{D_n-D_m}}-1$. The first order derivative of $\psi(x)$ is given by
 \begin{align}
\psi'(x) 
=&-D_m    + D_m
e^{\frac{N }{D_n}  } 
\left(1+x\right)^{\frac{D_m-D_n}{D_n}   } .
\end{align}
Because $D_m<D_n$, $\psi'(x) $ is a monotonically decreasing function of $x$. Given $x\leq e^{\frac{N}{D_n-D_m}}-1$, we have
 \begin{align}
\psi'(x)  \geq& \psi'\left(e^{\frac{N}{D_n-D_m}}-1\right)\\\nonumber
=&-D_m    + D_m
e^{\frac{N }{D_n}  } 
\left(e^{\frac{N}{D_n-D_m}}\right)^{\frac{D_m-D_n}{D_n}   }  =0,
\end{align}
which means that $\psi(x)$ is a monotonically increasing function of $x$ for $x\leq e^{\frac{N}{D_n-D_m}}-1$. Therefore, 
 \begin{align}
\psi(x) \geq \psi\left(e^{\frac{N}{D_n}}\left(e^{\frac{N}{D_m}}-1\right)\right) \geq \psi(0) =  D_n
e^{\frac{N }{D_n}  } .
\end{align}
Thus, \eqref{dmx} holds, i.e., $E_{0}\geq E_{H2}$. The proof is complete. 
     \bibliographystyle{IEEEtran}
\bibliography{IEEEfull,trasfer}

   \end{document}